\newtheorem{theorem}{Theorem}
\newtheorem{lemma}{Lemma}
\newtheorem{remark}{Remark}
\begin{document}
%%%%%%%%%%%%%%%%%%%%%%%%%%%%%%%%%%%%%%%%%%%%%%%%%%%%%%%%%

\title{
\bf  Ranking in the generalized Bradley-Terry models when the strong connection condition fails}
\author{Ting Yan \\
\small Department of Statistics, Central China Normal University, Wuhan, China
}
\date{}
\maketitle

\setcounter{footnote}{-1}
\footnote{Address correspondence to Ting Yan,
Department of Statistics,
Central China Normal University,
Wuhan 430079, China;
E-mail: tingyanty@mail.ccnu.edu.cn}

\begin{abstract}
\baselineskip=16pt
\noindent
For nonbalanced paired comparisons, a wide variety of ranking methods have been proposed.
One of the best popular methods is the Bradley-Terry model in which the ranking of
a set of objects is decided by the maximum likelihood estimates (MLEs) of merits parameters.
However, the existence of MLE for the Bradley-Terry model and its generalized models
to allow for tied observation or home-field advantage or both to occur,
crucially depends on the strong connection condition on the directed graph
constructed by a win-loss matrix.
When this condition fails, the MLE does not exist and hence there is no solution of ranking.
In this paper, we propose an improved version of the $\varepsilon$ singular perturbation
proposed by Conner and Grant (2000), to address this problem and extend it to the generalized Bradley-Terry models.
Some necessary and sufficient conditions for the existence and uniqueness
of the penalized MLEs for these generalized Bradley-Terry-$\varepsilon$ models are derived.
Numerical studies show that the ranking is robust to the different $\varepsilon$.
We apply the proposed methods to the data of the 2008 NFL regular season.
\vskip 5 pt \noindent
\textbf{Key words}: Paired Comparisons; Ranking; Penalized maximum likelihood estimate;
Generalized Bradley-Terry-$\varepsilon$ models.\\

{\noindent \bf Mathematics Subject Classification} 62J15; 62F07.
\end{abstract}

\vskip 20 pt

\section{Introduction}
For nonbalanced paired comparisons, a wide variety of ranking methods have been proposed
[see, e.g., B\"{u}hlmann and Huber (1963); David (1987); Keener (1993)].
Among them, the Bradley-Terry model that itself
was independently proposed by Zermelo (1929) and Ford (1957), in which the ranking is decided by
the maximum likelihood estimates (MLEs) of merits parameters,
is one of the best popular methods.
For example, the World Chess Federation and the European Go
Federation use it to rank players. The Elo system uses it to
estimate the winning probability between two players.

Assume that $t$ objects are engaged in paired comparisons experiments
in which one object is judged to be preferred to another object in each comparison.
Let $a_{ij}$ be the number of times that $i$ wins $j$. By notation, we define $a_{ii}=0$.
Bradley and Terry (1952) supposed that the probability of $i$ winning $j$ is
\begin{equation*}
p_{ij}=\frac{u_i}{u_i+u_j},~~~~i,j=1,\ldots,t; i\neq j,
\end{equation*}
where $u_i$ is the merit parameter of object $i$. The likelihood function is
\begin{equation}\label{original-likelihood}
\prod_{i,j=1}^t \left(  \frac{u_i}{u_i+u_j} \right)^{a_{ij}}.
\end{equation}
Since \eqref{original-likelihood} is scaled invariable, there are two equivalently
normalized ways: $\sum_{i=1}^t u_i=1$ or $u_1=1$.
In order to guarantee the existence and uniqueness of the MLE for
the likelihood \eqref{original-likelihood},
the following strong connection condition is necessary and sufficient due to Ford (1957).

Condition A. The directed graph $\mathcal{G}(A)$ constructed by the win-loss matrix $A=(a_{ij})$
is strongly connected.

There are two equivalent statements for Condition A. One is that
the win-loss matrix $A=(a_{ij})$ is irreducible in the algebra language.
The other is that for any partition of all objects into two nonempty sets, some object in second set has
beaten some object in the first set at least once.

The Bradley-Terry model and its extensions have been extensively discussed in the
literature [for wide surveys see Davidson and Farquhar (1976), David (1988), Simons and Yao (1999)].
Hunter (2004) established minorization-maximization  (MM) algorithms for a class of extended Bradley-Terry models
to solve the MLEs. Recently, Caron and Doucet (2012) derived
expectation-maximization (EM) algorithms for solving the maximum a posteriori estimate (MAPE) by incorporating some
suitable latent variables into the Bradley-Terry model and its extensions, and shew
that Hunter's MM algorithms can be reinterpreted as the standard EM algorithms,
and first proposed Gibbs samplers to perform Bayesian inference.
When the number of objects $t$ is small and all $n_{ij}$ are relatively large,
where $n_{ij}$ is the number of comparisons between $i$ and $j$ and $n_{ii}=0$ for convenience,
Condition A usually holds.
In its inverse scenario that $t$ is moderate or large and the design matrix $\mathbf{n}=(n_{ij})$
is sparse, Condition A may fail.
A simple example is that whenever there are undefeated objects or objects with no wins, Condition A fails.
The game results of 2007 and 2008 National Football League (NFL) regular seasons
in which there are $32$ teams and each team plays with only 13 other teams, fall into this case.
More cases can be found in the American college football regular seasons,
in which hundreds of teams play each other and each team has at most tens of games.
When Condition A fails, the MLE doesn't exist not only in the Bradley-Terry model but also in the
the Thurstone model suggesting the probability of $i$ preferred to $j$ is $\Phi(\log u_i - \log u_j)$
where $\Phi(\cdot)$ is the standard normal distribution (Thurston, 1927).
Several approaches have been proposed to address this problem.
Mease (2003) used the penalized likelihood with the penalized factor
$\prod_i \Phi( \log u_i )\Phi(-\log u_i)$ in the Thurstone model to rank American college football teams.
For the Bradley-Terry model, Conner and Grant (2000) proposed a singular perturbation
by letting $\bar{a}_{ij}=a_{ij}+\varepsilon(1-\delta_{ij}),~ \varepsilon>0$,
where $\delta_{ij}$ is the Kronecker delta sign, and proved that the extended ranking is identical to that of
the original model if Condition A holds.
This data transformation guarantees that the directed graph $\Gamma(\bar{A})$ is strongly connected
in any case and therefore the solution maximizing the function \eqref{original-likelihood}
with $a_{ij}$ replaced by $\bar{a}_{ij}$, uniquely exists.
Another possibility is to use the method of Bayesian ranking by Caron and Doucet (2012).

Different $\varepsilon$ may lead to different rankings.
For example, consider such outcomes $a_{12}=2$, $a_{21}=1$, $a_{14}=1$, $a_{34}=1$, $a_{43}=2$
and for others $a_{ij}=0$.
When $\varepsilon=0.1$, $\bar{u}_1=1$, $\bar{u}_2=0.470$, $\bar{u}_3=0.162$, $\bar{u}_4=0.262$,
where $(\bar{u}_1, \ldots, \bar{u}_t)$ is the solution maximizing the function \eqref{original-likelihood}
with $a_{ij}$ replaced by $\bar{a}_{ij}$.
When $\varepsilon=0.5$, $\bar{u}_1=1$, $\bar{u}_2=0.569$, $\bar{u}_3=0.453$, $\bar{u}_4=0.585$.
Therefore $\varepsilon=0.1$ and $\varepsilon=0.5$ lead two different rankings $1\succ2\succ4\succ3$ and $1\succ4\succ2\succ3$,
where $i \succ j$ denotes the partial order of $i$ and $j$ if $\bar{u}_i>\bar{u}_j$.
In this paper, we propose an improved singular perturbation method that aims to eliminate this possibility in Section 2,
and generalize it into
a class of extended Bradley-Terry models involving the ties and home-field advantage in Section 3.
Numerical studies and the application of the 2008 NFL regular season
are given in Section 4.
In section 5, the comparisons between the methods of Bayesian ranking
and the perturbation method are presented.
Some discussion is in Section 6.

\section{Improved singular perturbation method}
If all objects can be partitioned into two nonempty sets so
that there are no inter-set comparisons, then there is no basis for
ranking the objects in the first set against the objects in the second set.
The following condition may be a minimum requirement for any paired comparison design due to Kendall (1955).

Condition B. The undirected graph constructed by the adjacent matrix $\mathbf{n}$ is connected.

Condition B is equivalent to that $\mathcal{G}(A)$ is weakly connected.
In view of this, we propose a modified version of singular perturbation:
\begin{equation}\label{aij-tilde}
\tilde{a}_{ij}=a_{ij}+\varepsilon I(n_{ij}>0),~ \varepsilon>0.
\end{equation}
If Condition B holds, then $\mathcal{G}(\tilde{A})$ must be strongly connected,
where $\tilde{A}=(\tilde{a}_{ij})$.
Henceforth, this model will be called the ``Bradley-Terry-$\varepsilon$" model.
The likelihood for the Bradley-Terry-$\varepsilon$ model is
\begin{equation}\label{Bradley-Terry-epsilon}
L_1(\mathbf{u})=\prod_{i,j=1}^n \left(  \frac{u_i}{u_i+u_j} \right)^{\tilde{a}_{ij}}=
\prod_{i,j=1}^n \left(  \frac{u_i}{u_i+u_j} \right)^{a_{ij}}\times \prod_{i,j=1}^n \left(  \frac{u_i}{u_i+u_j} \right)^{\varepsilon I(n_{ij}>0)}.
\end{equation}
As pointed by the referee, the item associated with $\varepsilon$ in the above expression can be viewed as the
penalized factor. Hereafter, the likelihood associated with the matrix $\tilde{A}$ will be called the penalized likelihood and
the solution $\mathbf{\hat{u}}=(\hat{u}_1, \ldots, \hat{u}_t)$ maximizing it, be labeled as the penalized MLE (PMLE).
Similar to Conner and Grant's (2000) proofs, we have:
\begin{theorem}\label{revision-bt}
For the penalized likelihood \eqref{Bradley-Terry-epsilon}: \\
(a)The PMLEs $\hat{u}_i$, $i=1,\ldots, t$ uniquely exist if and only if Condition B holds. \\
(b)if Condition $A$ holds, then the extended ranking is identical with the ranking provided by the original model.
\end{theorem}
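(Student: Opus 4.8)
\quad
\emph{Part (a).} The plan is to reduce everything to Ford's (1957) theorem applied to the perturbed win--loss matrix $\tilde A=(\tilde a_{ij})$. The elementary observation is that $\tilde a_{ij}>0$ exactly when $n_{ij}>0$: if $n_{ij}>0$ then $\tilde a_{ij}=a_{ij}+\varepsilon\ge\varepsilon>0$, while if $n_{ij}=0$ then $a_{ij}=0$ and $\tilde a_{ij}=0$. Since $n_{ij}=n_{ji}$, the digraph $\mathcal{G}(\tilde A)$ contains the arc $i\to j$ iff it contains $j\to i$; thus $\mathcal{G}(\tilde A)$ is a ``symmetric'' digraph whose underlying undirected graph is exactly the graph of $\mathbf n$, and therefore $\mathcal{G}(\tilde A)$ is strongly connected iff the graph of $\mathbf n$ is connected, i.e.\ iff Condition B holds (this is the assertion already noted after \eqref{aij-tilde}). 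Since \eqref{Bradley-Terry-epsilon} is the ordinary Bradley--Terry likelihood built from $\tilde A$, Ford's theorem (strong connectivity is necessary and sufficient for existence and uniqueness of the MLE) applied to $\tilde A$ gives that the PMLE $\hat{\mathbf u}$ uniquely exists iff $\mathcal{G}(\tilde A)$ is strongly connected, hence iff Condition B holds. For the ``only if'' direction one may also argue directly: when Condition B fails, the objects split into two nonempty sets with no comparisons between them, $L_1(\mathbf u)$ factorizes into two factors in disjoint, separately scale--invariant parameter blocks, so the maximizer can never be unique.

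\emph{Part (b).} Assume Condition A; then Condition B holds as well, so by (a) the PMLE $\hat{\mathbf u}(\varepsilon)$ is well defined for every $\varepsilon\ge0$, with $\hat{\mathbf u}(0)=\mathbf u^{*}$, the original MLE. First I would establish continuity of $\varepsilon\mapsto\hat{\mathbf u}(\varepsilon)$ at $0$, so that $\hat{\mathbf u}(\varepsilon)\to\mathbf u^{*}$ as $\varepsilon\to0^{+}$: the log of \eqref{Bradley-Terry-epsilon} is smooth and, under the normalization $u_{1}=1$, strictly concave in $(\log u_{2},\dots,\log u_{t})$ because the Hessian of its negative is a weighted graph Laplacian that is positive definite on the hyperplane transverse to the scaling direction whenever $\mathcal{G}(\tilde A)$ is strongly connected; the implicit function theorem then yields a smooth branch $\hat{\mathbf u}(\varepsilon)$ through $\mathbf u^{*}$, and an argmax--continuity argument (both the likelihood and the penalty tend to $-\infty$ at the boundary, so the maximizers stay in a fixed compact subset of the open simplex) identifies this branch with the global PMLE. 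In particular, if the components of $\mathbf u^{*}$ are distinct, the PMLE ranking coincides with the original ranking for all sufficiently small $\varepsilon>0$.

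To upgrade this to \emph{every} $\varepsilon>0$ --- the literal content of (b) --- one must show that along the whole path no pair of components that is strictly ordered at $\varepsilon=0$ ever becomes tied. The plan is to differentiate the score equations implicitly and track the sign of $\tfrac{d}{d\varepsilon}\log\!\bigl(\hat u_{i}(\varepsilon)/\hat u_{j}(\varepsilon)\bigr)$: since the information matrix is a weighted graph Laplacian and the $\varepsilon$--derivative of the score vector has a definite sign pattern (the perturbation adds equal fictitious wins to both members of every played pair, pulling the merits toward equality), $\hat u_{i}/\hat u_{j}$ moves monotonically toward $1$ and hence never crosses it --- exactly as one checks in the two--object case, where $\hat u_{1}/\hat u_{2}=(a_{12}+\varepsilon)/(a_{21}+\varepsilon)$ is visibly monotone in $\varepsilon$. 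This monotonicity (no--crossing) step is the main obstacle; the continuity part is routine, but controlling the component ordering uniformly in $\varepsilon$ is precisely where we follow the structure of Conner and Grant's (2000) argument. (If $\mathbf u^{*}$ has accidental ties, ``identical ranking'' should be read as ``no reversals''; ties forced by a symmetry of the data are preserved since the perturbation respects that symmetry.)
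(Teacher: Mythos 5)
Your part (a) is correct and is essentially the intended reduction: since $\tilde a_{ij}>0$ exactly when $n_{ij}>0$ and $n_{ij}=n_{ji}$, the digraph $\mathcal{G}(\tilde A)$ is symmetric, so it is strongly connected iff the undirected graph of $\mathbf{n}$ is connected (Condition B), and Ford's theorem applied to $\tilde A$ then gives existence and uniqueness of the PMLE iff Condition B holds; your factorization argument for the ``only if'' direction is also sound. For comparison, note that the paper itself supplies no written proof of this theorem --- it only asserts that the claims follow ``similar to Conner and Grant's (2000) proofs,'' and Appendix A proves only the lemmas behind Theorem 2 --- so there is no in-paper argument to check your details against; the burden falls entirely on your own argument.

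The genuine gap is in part (b). As you yourself observe, the statement as used in this paper is a fixed-$\varepsilon$ statement (the Section 1 example and all numerical work compare rankings at fixed $\varepsilon$), so one must exclude rank reversals for \emph{every} $\varepsilon>0$, not merely for $\varepsilon$ small. Your continuity/implicit-function argument delivers the conclusion only for sufficiently small $\varepsilon$, and only when the components of $\mathbf{u}^{*}$ are distinct. The device you propose to close the gap --- that $\hat u_i(\varepsilon)/\hat u_j(\varepsilon)$ ``moves monotonically toward $1$ and hence never crosses it'' --- is not proved (you verify it only in the two-object case), and even as a logical template it is incomplete: monotone movement toward $1$ does not by itself forbid passing through $1$; you would need, for instance, that $|\log(\hat u_i(\varepsilon)/\hat u_j(\varepsilon))|$ is nonincreasing in $\varepsilon$, or that the $\varepsilon$-derivative of $\log(\hat u_i/\hat u_j)$ vanishes whenever the ratio equals $1$, and neither is established. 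The sign structure of $-H^{-1}v$, with $H$ a weighted graph Laplacian and $v_i=\sum_{j\sim i}(1-2\hat p_{ij})$, does not obviously control each pairwise ratio in an unbalanced design. This uniform-in-$\varepsilon$ no-crossing control is precisely the content of the Conner--Grant argument that the paper invokes for this theorem, so your proposal stops exactly where the real work lies; as written, part (b) is a plan rather than a proof.
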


\section{Generalized Bradley-Terry-$\varepsilon$ models}
When the objects do not differ in the quality or the sense of perception of a judge is not
sharp enough to detect the difference of objects,
ties (no preference) may be possible.
Based on that the Bradley-Terry model can be expressed as the logistic distribution:
$p_{ij}=\int_{-(\log u_i -\log u_j)}^\infty \mbox{sech}^2(x/2) dx $,
Rao and Kupper (1967) supposed that the judge expresses his
preference according to a random variable $d_{ij}$ with the distribution function
given by
\begin{equation*}
P(d_{ij}>d)=\frac{1}{4}\int_{-(\log u_i -\log u_j)+d}^\infty \mbox{sech}^2(x/2)dx.
\end{equation*}
When $|d_{ij}|$ is less than a threshold parameter, the judge is assumed to be unable to distinguish $i$ and $j$ and will declare
a tie. The Rao-Kupper model supposes that:
\begin{eqnarray*}
P(\mbox{$i$ beats $j$}) & = & u_i/(u_i+\theta u_j), \\
P(\mbox{$j$ beats $i$}) & = & u_j/(\theta u_i + u_j) \\
P(\mbox{$i$ ties $j$})  & = & (\theta^2-1) u_i u_j/[(u_i+\theta u_j)(u_j+\theta u_i)],
\end{eqnarray*}
where $\theta>1$ is a threshold parameter.
The existence of the MLE in the Rao-Kupper model aslo crucially depends on Condition A.
Using the single perturbation matrix $\tilde{A}$, the penalized likelihood for the Rao-Kupper-$\varepsilon$ model is
\begin{equation}\label{Rao-Kupper}
L_2(\mathbf{u},\theta)=\prod_{i<j} \left(\frac{ u_i }{ u_i +\theta u_j}\right)^{\tilde{a}_{ij}}
\left(\frac{u_j}{\theta u_i +u_j}\right)^{\tilde{a}_{ji}}
\left[\frac{ (\theta^2-1)u_i u_j }{(u_i+\theta u_j)(u_j+\theta u_i)} \right]^{t_{ij}},
\end{equation}
where $t_{ij}$ is the number of ties between $i$ and $j$.
We don't impose the singular perturbation on tie counts.
If there are no ties observed, it may be unnecessary to
use the tied models.

Extending for a different direction to allow tied observations,
Davidson (1970) proposed the geometric mean model,
in which the probabilities are in the ratio
\begin{equation*}
P(\mbox{$i$ beats $j$}): P(\mbox{$j$ beats $i$}): P(\mbox{$i$ ties $j$})=u_i:u_j:\theta\sqrt{u_iu_j}.
\end{equation*}
where $\theta\ge 0$ is the discrimination factor.
A characteristic of the Davidson model is that for
balanced paired comparisons, the ranking by the MLE agrees with that obtained from a scoring
system that allots two points for a win, one for a tie and zero for
a loss. The penalized likelihood for the Davidson-$\varepsilon$ model is
\begin{equation}\label{davidson}
L_3(\mathbf{u},\theta)=\prod_{i<j}\frac{ u_i^{\tilde{a}_{ij}} u_j^{\tilde{a}_{ji}} (\theta\sqrt{ u_i u_j })^{t_{ij}}}{ (u_i +u_j +\theta\sqrt{u_iu_j})^{\tilde{n}_{ij}}},
\end{equation}
where $\tilde{n}_{ij}=\tilde{a}_{ij}+\tilde{a}_{ji}+t_{ij}$.

Agresti (1990) postulated that the objects are ordered and the probability of $i$ beating
$j$ depends on which object is presented first.
If the objects are sports teams, this assumption leads to the ``home-field advantage" model:
\begin{equation*}
P(\mbox{$i$ beats $j$})=\left \{
\begin{matrix}
\gamma u_i/(\gamma u_i + u_j), & \mbox{if $i$ is at home} \\
u_i/( u_i + \gamma u_j),       & \mbox{if $j$ is at home},
\end{matrix}
\right.
\end{equation*}
where $\gamma>0$ measures the strength of the home-field advantage or disadvantage.
Let $n_{ij\cdot i}$ be the number of times that $i$ plays agaist $j$ with $i$ as the home team
and $a_{ij\cdot i}$ be the times of $i$ winning $j$ when $i$ is at home. The penalized likelihood for
the home-field-advantage-$\varepsilon$ model is
\begin{equation}\label{home-field-model}
L_4(\mathbf{u},\gamma)= \prod_{i<j}\left[ \frac{(\gamma u_i)^{\tilde{a}_{ij\cdot i}} u_j^{\tilde{a}_{ji\cdot i}}
}{(\gamma u_i + u_j)^{\tilde{a}_{ij\cdot i}+\tilde{a}_{ji\cdot i}}} \right]\left[
\frac{ u_i^{\tilde{a}_{ij\cdot j}} (\gamma u_j)^{\tilde{a}_{ji\cdot j}}
}{( u_i +\gamma u_j)^{\tilde{a}_{ij\cdot j}+\tilde{a}_{ji\cdot j}}} \right],
\end{equation}
where $\tilde{a}_{ij\cdot i}=a_{ij\cdot i} + \varepsilon I(n_{ij\cdot i}>0)$.

If ties and home-field advantage both exist, David (1988, page 144) suggested
\begin{eqnarray*}
P(\mbox{$i$ beats $j$})& = &\left \{
\begin{matrix}
\gamma u_i/(\theta u_i + u_j+\theta\sqrt{u_i u_j}), & \mbox{if $i$ is at home}, \\
u_i/( u_i + \theta u_j+\theta\sqrt{u_i u_j}),       & \mbox{if $j$ is at home}; \\
\end{matrix}
\right. \\
P(\mbox{$i$ ties $j$})& = &\left \{
\begin{matrix}
\theta\sqrt{u_i u_j}/(\gamma u_i + u_j+\theta\sqrt{u_i u_j}), & \mbox{if $i$ is at home}, \\
\theta\sqrt{u_i u_j}/( u_i + \gamma u_j+\theta\sqrt{u_i u_j}),       & \mbox{if $j$ is at home}.
\end{matrix}
\right.
\end{eqnarray*}
Let $t_{ij\cdot i}$ be the number of ties between $i$ and $j$ when $i$ is at home.
The penalized likelihood for the David-$\varepsilon$ model is
\begin{equation}\label{home-tie-model}
L_5(\mathbf{u},\theta,\gamma)
= \prod_{i<j}\left[ \frac{(\gamma u_i)^{\tilde{a}_{ij\cdot i}} u_j^{\tilde{a}_{ji\cdot i}} (\theta\sqrt{u_i u_j} )^{t_{ij\cdot i}}
}{(\gamma u_i + u_j+\theta\sqrt{u_i u_j})^{\tilde{n}_{ij\cdot i}}} \right]\left[
\frac{( u_i)^{\tilde{a}_{ij\cdot j}} (\gamma u_j)^{\tilde{a}_{ji\cdot j}} (\theta\sqrt{u_i u_j} )^{t_{ij\cdot i}}
}{( u_i +\gamma u_j+\theta\sqrt{u_i u_j})^{\tilde{n}_{ij\cdot j}}} \right],
\end{equation}
where $\tilde{n}_{ij\cdot i}=\tilde{a}_{ij\cdot i}+\tilde{a}_{ji\cdot i}+t_{ij\cdot i}$.

In order to guarantee the existence of the PMLE for the penalized likelihoods \eqref{home-field-model} and \eqref{home-tie-model},
we introduce the following condition that is a week version of Assumption 3 in Hunter (2004).
\vskip5pt
\noindent
Condition C. In every possible partition of the teams into two nonempty subsets $Q_1$ and $Q_2$,
some team in $Q_1$ has comparisons with some team in $Q_2$ as home team, and some team in $Q_1$ has
comparisons with some team in $Q_2$ as visiting team.

The existence and uniqueness of the PMLE in the above generalized Bradley-Terry-$\varepsilon$ models
is stated as follows:
\begin{theorem}\label{theorem-generalizedBT}
Let $\Omega=\{\mathbf{u}\in R^t:  u_i>0, \sum_{i=1}^t u_i=1\}$.
The parameter space is assumed to be
$\Omega\times\{\theta\in R: \theta>1\}$ for the penalized likelihood \eqref{Rao-Kupper};
$\Omega\times \{\theta\in R: \theta>0\}$ for the penalized likelihood \eqref{davidson};
$\Omega\times \{\gamma\in R: \gamma>0\}$ for the penalized likelihood \eqref{home-field-model};
and $\Omega\times\{\theta\in R: \theta>0\}\times \{\gamma\in R: \gamma>0\}$
for the penalized likelihood \eqref{home-tie-model}.\\
(a)The PMLE for the penalized likelihoods \eqref{Rao-Kupper} and \eqref{davidson} exists and is unique
if and only if Condition B holds and there is at least one tie.\\
(b)The PMLE for the penalized likelihood \eqref{home-field-model} exists and is unique if Condition C holds.\\
(c)The PMLE for the penalized likelihood \eqref{home-tie-model} exists and is unique if Condition C holds and there is at least one tie.
\end{theorem}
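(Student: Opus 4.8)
\noindent\emph{Proof sketch.} The plan is to follow the classical route of Ford (1957) and Hunter (2004): pass to log-coordinates in which every penalized likelihood is concave, upgrade this to strict concavity to obtain uniqueness, and establish coercivity to obtain existence; Conditions B and C enter only through the last two steps. Throughout I set $u_i=e^{\beta_i}$ with $\sum_{i=1}^t\beta_i=0$ (which removes the common‑scale invariance shared by all of $L_2,\ldots,L_5$ in $\mathbf u$), $\theta=e^{\psi}$ (with $\psi>0$ for \eqref{Rao-Kupper} and $\psi\in\mathbb R$ for \eqref{davidson} and \eqref{home-tie-model}), and $\gamma=e^{\delta}$. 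Under this substitution every denominator appearing in $L_2,\ldots,L_5$ has the form $e^{\ell_1}+\cdots+e^{\ell_k}$ for affine functions $\ell_1,\ldots,\ell_k$ of the new parameters — for instance $\gamma u_i+u_j+\theta\sqrt{u_iu_j}=e^{\delta+\beta_i}+e^{\beta_j}+e^{\psi+(\beta_i+\beta_j)/2}$ — while every numerator is a product of exponentials of affine functions. Hence each of $\log L_2,\ldots,\log L_5$ is a sum of affine terms minus a nonnegative combination of log‑sum‑exp functions of affine functions, and since log‑sum‑exp is convex, each of these log‑likelihoods is concave on its (convex) reparametrized domain; in the Rao--Kupper case one also uses that $\log(e^{2\psi}-1)$ is concave on $\{\psi>0\}$, so the parameter constraint does no harm.

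For uniqueness I would show the Hessian of each log‑likelihood is negative definite. Being a nonnegative combination of the negated Hessians of the individual log‑sum‑exp blocks, it is negative semidefinite, and a direction $(\mathbf v,\ldots)$ with $\sum_i v_i=0$ lies in its null space exactly when, for every realized comparison pair, all the affine arguments of the corresponding block have the same slope along that direction; so it suffices to show only $(\mathbf v,\ldots)=0$ does this. Here the perturbation \eqref{aij-tilde} is essential: because $\tilde{a}_{ij}$ and $\tilde{a}_{ji}$ (and, in the home models, both $\tilde{a}_{ij\cdot i}$ and $\tilde{a}_{ji\cdot i}$ for every realized home game) are strictly positive as soon as the comparison occurs, every edge of the comparison graph genuinely imposes its constraint, even an orientation never observed in the raw data. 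For \eqref{Rao-Kupper} and \eqref{davidson}, juxtaposing the two orientations of an edge forces $v_i=v_j$ and forces the $\psi$‑component of the direction to vanish; Condition B then propagates $v_i=v_j$ along the connected comparison graph, so $\mathbf v$ is constant and hence $\mathbf v=0$. For \eqref{home-field-model} and \eqref{home-tie-model}, a realized home game of $i$ against $j$ forces $v_j-v_i$ to equal the $\delta$‑component of the direction (and, for \eqref{home-tie-model}, the $\psi$‑component to equal half of it); inspecting the vertices where $v_i$ is largest and where it is smallest and invoking Condition C forces the $\delta$‑component to be $0$, after which connectivity again forces $\mathbf v=0$. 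In every case the null space is trivial, the log‑likelihood is strictly concave, and there is at most one maximizer.

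For existence, since every log‑likelihood is $\le 0$ it suffices to show it tends to $-\infty$ at every boundary point of the closed reparametrized parameter space (being concave, along any ray it is eventually affine up to a bounded term, so only eventually‑constant rays must be excluded). The finite boundary faces are handled directly: as $\theta\to1^+$ the factor $(\theta^2-1)^{t_{ij}}$ drives $\log L_2\to-\infty$ provided some tie exists, and likewise $\theta\to0^+$ drives $\log L_3$ and $\log L_5$ to $-\infty$ through their tie factors $(\theta\sqrt{u_iu_j})^{t_{ij}}$ and $(\theta\sqrt{u_iu_j})^{t_{ij\cdot i}}$. For rays to infinity I compute the recession slope term by term; every win factor and every home‑win factor contributes a nonpositive slope that is strictly negative precisely along the directions excluded in the uniqueness argument, so the same null‑space bookkeeping combined with Condition B (for \eqref{Rao-Kupper} and \eqref{davidson}) or Condition C (for \eqref{home-field-model} and \eqref{home-tie-model}) makes the total recession slope strictly negative in every nonzero direction, while the residual escapes $\theta\to\infty$, $\gamma\to0$ and $\gamma\to\infty$ are killed by any single win or home‑win factor with positive exponent — one of which exists because Condition B (respectively C) forces at least one comparison. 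Compactifying so that the log‑likelihood is upper semicontinuous with value $-\infty$ on all these faces, its supremum is attained, and strict concavity makes the maximizer unique and interior; transferring back through the substitution gives the PMLE. For the converse in (a): if Condition B fails the comparison graph splits into parts with no cross‑comparisons, the penalized likelihood then depends only on within‑part ratios of the $u_i$, and shrinking the total mass of one part to $0$ keeps the likelihood fixed while leaving the open simplex, so no maximizer exists; and if no tie occurs, $\partial\log L_2/\partial\theta<0$ (respectively $\partial\log L_3/\partial\theta<0$) for every fixed $\mathbf u$, so the supremum is approached only as $\theta\downarrow1$ (respectively $\theta\downarrow0$), a point outside the parameter space, and again no maximizer exists.

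The main obstacle is the coercivity step for \eqref{home-field-model} and especially \eqref{home-tie-model}, where $\mathbf u$ may approach the boundary of the simplex while $\gamma$ — and, for \eqref{home-tie-model}, also $\theta$ — simultaneously runs to $0$ or $\infty$; one must verify that the two‑sided requirement built into Condition C is precisely what rules out every such coupled escape direction, and the two extra scalar parameters in \eqref{home-tie-model} make the recession‑slope accounting there the most delicate part of the argument.
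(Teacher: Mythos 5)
Your proposal is correct and follows essentially the same route as the paper: log-reparameterization giving concavity via convexity of log-sum-exp (the paper's Lemma 2 invokes Hunter's (2004) inequality (21), which is the same fact), strict concavity under Condition B or C plus a tie for uniqueness, existence via divergence of the penalized log-likelihood at the boundary (the paper's Lemma 1 bounds the likelihood by a single vanishing perturbed win/tie factor where you compute recession slopes --- the same bookkeeping), and necessity in part (a) via the partition/scaling and $\theta$-monotonicity arguments. The only cosmetic differences are Hessian null-space versus equality conditions in the H\"older-type inequality, and your one overstatement --- when Condition B fails, the scaling invariance directly yields failure of uniqueness rather than of existence --- mirrors the paper's own case (1) and does not affect the ``exists and is unique'' equivalence.
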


Theorem \ref{theorem-generalizedBT} immediately comes from the following two lemmas whose proofs are given
in Appendix A. The first shows the existence of the PMLE. The second proves the uniqueness of the PMLE by
showing the penalized log-likelihood is strictly concave on the parameter space.
\begin{lemma}\label{lemma-part1}
Let $\Omega=\{\mathbf{u}\in R^t:  u_i>0, \sum_{i=1}^t u_i=1\}$.
The parameter space is assumed to be
$\Omega\times\{\theta\in R: \theta>1\}$ for the penalized likelihood \eqref{Rao-Kupper};
$\Omega\times\{\theta\in R: \theta>0\}$ for the penalized likelihood \eqref{davidson};
$\Omega\times\{\gamma\in R: \gamma>0\}$ for the penalized likelihood \eqref{home-field-model}
and $\Omega\times\{\theta\in R: \theta>0\}\times\{\gamma\in R:
\gamma>0\}$ for the penalized likelihood \eqref{home-tie-model}.\\
(a) The PMLE for the penalized likelihoods \eqref{Rao-Kupper} and \eqref{davidson} exists
if and only if Condition B holds and there is at least one tie.\\
(b) The PMLE for the penalized likelihood \eqref{home-field-model} exists if and only if Condition C holds. \\
(c) The PMLE for the penalized likelihood \eqref{home-tie-model} exists if and only if Condition C holds
and there is at least one tie.
\end{lemma}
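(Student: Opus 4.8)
The plan is to treat all four penalized likelihoods with one compactification argument for the ``if'' direction, and then, for necessity, to rule out a maximizer by exhibiting an explicit degenerating sequence whenever a stated condition fails.

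For sufficiency, write $\ell=\log L$ for the relevant penalized log-likelihood. Every factor of $L$ is a Bradley--Terry-type win probability or a tie probability raised to the non-negative exponent $\tilde a_{ij}$, $\tilde a_{ij\cdot i}$, or $t_{ij}$, so $\ell\le 0$ everywhere and $\sup\ell$ is finite. Since $\Omega$ is the open simplex and the nuisance parameters range over open intervals, the parameter space is not compact; I would embed it in $\bar\Omega\times[1,\infty]$ for \eqref{Rao-Kupper}, in $\bar\Omega\times[0,\infty]$ for \eqref{davidson} and (in $\gamma$) for \eqref{home-field-model}, and in $\bar\Omega\times[0,\infty]^2$ for \eqref{home-tie-model}, with $\bar\Omega$ the closed simplex, and show that $\ell\to-\infty$ along every sequence in the interior whose limit lies on the added boundary. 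Once that is done, any maximizing sequence stays in a compact subset of the open parameter space and a convergent subsequence gives a maximizer by continuity.

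The boundary analysis splits according to which coordinates degenerate. If $u_i^{(k)}\to 0$ while the remaining coordinates stay in a compact part of the interior, then collecting exponents shows that the coefficient of $\log u_i$ in $\ell$ is at least $\varepsilon$: this is where the perturbation \eqref{aij-tilde} enters, since Condition~B (resp.\ Condition~C) makes every team non-isolated, hence $\tilde a_{ij}\ge\varepsilon$ (resp.\ $\tilde a_{ij\cdot i}\ge\varepsilon$) for some opponent, while every denominator factor containing $u_i$ stays bounded and bounded away from $0$; hence $\ell\to-\infty$. If the threshold parameter $\theta^{(k)}$ tends to the left endpoint of its range ($1$ for \eqref{Rao-Kupper}, $0$ for \eqref{davidson} and \eqref{home-tie-model}), the assumption that some $t_{ij}>0$ (resp.\ $t_{ij\cdot i}>0$) drives the term of $\ell$ carrying $\theta$ in the tie probability to $-\infty$ with everything else bounded. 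If $\theta^{(k)}\to\infty$, or $\gamma^{(k)}\to0$ or $\infty$, the net coefficient of $\log\theta$ (resp.\ $\log\gamma$) is bounded away from $0$ with the sign that forces $\ell\to-\infty$, because the perturbation again puts weight at least $\varepsilon$ on a comparison crossing the relevant cut. The delicate case is a \emph{joint} escape, e.g.\ $u_i^{(k)}\to0$ with $\gamma^{(k)}\to\infty$ and $\gamma^{(k)}u_i^{(k)}$ bounded, where no single coordinate term blows up; I would pass to ratio coordinates $v_i=u_i/u_{i_0}$ and to $\log\theta,\log\gamma$, enumerate the resulting directions of recession of $\ell$, and use Condition~C --- that every two-set partition is crossed by a home comparison and by a visiting comparison --- to show each such direction is strictly decreasing. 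This is the analogue of Ford's strong-connection requirement and is where the full strength of Condition~C, rather than mere non-isolation of every team, is needed; I expect this to be the main obstacle.

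For necessity, suppose a required condition fails. If there is no tie, then in \eqref{Rao-Kupper} every factor $u_i/(u_i+\theta u_j)$ and $u_j/(\theta u_i+u_j)$ is strictly decreasing in $\theta$, so $L_2$ is strictly decreasing in $\theta$ on $(1,\infty)$ and its supremum is the unattained limit as $\theta\downarrow1$; the same holds with $\theta\downarrow0$ for \eqref{davidson} and \eqref{home-tie-model}, so the PMLE does not exist. If Condition~B fails, the undirected graph of $\mathbf n$ has $m\ge2$ components and $\ell$ decomposes as a sum of per-component penalized log-likelihoods, each invariant under rescaling the $u_i$ of its component; taking the projective maximizer on each component, which exists by Theorem~\ref{revision-bt} applied to the subgraph, and imposing $\sum_i u_i=1$ yields an $(m-1)$-dimensional family of maximizers running out to $\min_i u_i=0$, so there is no well-defined PMLE. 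If Condition~C fails for \eqref{home-field-model} or \eqref{home-tie-model}, there is a partition $Q_1\mid Q_2$ not crossed by a home comparison (say); rescaling $u_i\mapsto\lambda u_i$ on $Q_1$ while letting $\gamma$ run to the matching endpoint leaves $L$ non-decreasing and produces a degenerating maximizing sequence, so again no PMLE exists. The monotonicity identities and the bookkeeping of exponents in the joint case are the routine parts I would relegate to the appendix.
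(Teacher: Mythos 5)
Your overall strategy is the same as the paper's (sufficiency by showing the penalized likelihood vanishes as the parameters approach the boundary of a compactification, necessity by monotonicity in the tie parameter and by degeneracy when the connectivity condition fails), but two steps in your sketch would not go through as written. First, in the case $u_i\to 0$ you invoke Condition B (resp.\ C) only through ``every team is non-isolated,'' and you treat a single vanishing coordinate with the others fixed in a compact interior set. On the simplex boundary a whole subset $S$ of coordinates can vanish simultaneously, and then the opponent $j$ with $n_{ij}>0$ supplied by non-isolation may itself satisfy $u_j\to 0$, in which case the factor $u_i/(u_i+u_j)$ need not tend to $0$. What is actually needed, and what the paper uses, is \emph{connectivity}: a comparison crossing the cut between the vanishing set $S$ and its complement, so that some $k$ with $u_k\to 0$ carries a perturbed count $\ge\varepsilon$ against some $l$ with $u_l$ bounded away from $0$. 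The paper also bounds the likelihood by that single cut-crossing probability raised to the power $\varepsilon$, all remaining factors being probabilities $\le 1$; this one-factor bound disposes of most of the ``joint escape'' scenarios you single out as the main obstacle (the chosen factor has $u_l$ in its denominator, so it still tends to $0$ when $\theta$ or $\gamma$ move simultaneously), so the recession-direction machinery you plan is largely unnecessary.

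Second, your necessity argument when Condition C fails does not work as stated: rescaling $u_i\mapsto\lambda u_i$ on $Q_1$ while moving $\gamma$ is not likelihood-non-decreasing, because $\gamma$ also enters every \emph{within-block} factor of \eqref{home-field-model} and \eqref{home-tie-model}, so those factors change under your transformation and $L$ can decrease. The paper instead argues by cases: if there are no interset comparisons at all, the likelihood is invariant under rescaling one block (the same scale-invariance you use for Condition B), and otherwise it argues that the likelihood is maximized only at the endpoint $\gamma=0$ or $\gamma=\infty$ of the home parameter. Note also that when cross comparisons exist but are all hosted by one block, what really fails is identifiability ($\gamma$ is confounded with the relative scale of the two blocks), so maximizers exist but form a continuum; like the paper, you must lean on the convention that a non-unique maximizer is ``no PMLE,'' but the specific monotone degenerating sequence you describe is not established. (Your no-tie argument via monotonicity in $\theta$ is correct, and is in fact stated more cleanly than in the paper; the appeal to Theorem \ref{revision-bt} for per-component maximizers is neither quite applicable to the tied models nor needed.)
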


\begin{lemma}\label{lemma-part2}
For the reparameterization $(\mathbf{u}, \theta, \gamma)\to (\boldsymbol{\beta}, \log \theta, \log \gamma)$ in which $\beta_i=\log u_i-\log u_1$
and $\phi=\log \theta$. Let $\Omega^*=\{\beta\in R^t: \beta_1=0 \}$.\\
(a) The reparameterized versions of penalized log-likelihoods $\log L_2(\boldsymbol{\beta}, \log \theta)$ on $\Omega^*\times R_+$
and $\log L_3(\boldsymbol{\beta}, \log \theta)$ on $\Omega^*\times R$ are strictly concave
if and only if Condition B holds and there is at least one tie, where $R_+=\{x\in R:x>0\}$.\\
(b) The reparameterized version of the penalized log-likelihood
$\log L_4(\boldsymbol{\beta}, \log \gamma)$ is strictly concave on $\Omega^* \times R$
if Condition C holds.\\
(c) The reparameterized version of the penalized log-likelihood
$\log L_5(\boldsymbol{\beta}, \log \theta, \log \gamma)$ is strictly concave on $\Omega^* \times R^2$
if Condition C holds and there is at least one tie.
\end{lemma}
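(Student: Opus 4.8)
The plan is to prove each claim by showing that the Hessian of the corresponding reparameterized penalized log-likelihood is negative definite on the stated open parameter set. Writing $u_i=e^{\beta_i}$, $\phi=\log\theta$, $\psi=\log\gamma$, and $\ell(z_1,\dots,z_m)=\log\sum_{k=1}^m e^{z_k}$, I would first observe that every denominator occurring in \eqref{Rao-Kupper}--\eqref{home-tie-model} is a sum of exponentials of affine functions of $(\boldsymbol\beta,\phi,\psi)$; thus, for instance, $\log L_3$ becomes $\sum_{i<j}\big[\tilde a_{ij}\beta_i+\tilde a_{ji}\beta_j+t_{ij}(\phi+\tfrac12\beta_i+\tfrac12\beta_j)-\tilde n_{ij}\,\ell\big(\beta_i,\beta_j,\phi+\tfrac12\beta_i+\tfrac12\beta_j\big)\big]$, and the other four are of the same shape. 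The only summand that is neither affine nor of the form $-c\,\ell$ with $c\ge 0$ is the $t_{ij}\log(\theta^2-1)=t_{ij}\log(e^{2\phi}-1)$ in $\log L_2$, and a one-variable computation gives its second derivative in $\phi$ as $-4e^{-2\phi}/(1-e^{-2\phi})^2<0$ on $R_+$. Since $\ell$ is convex (log-sum-exp), each reparameterized penalized log-likelihood is then concave, and what remains is to identify exactly the directions along which its Hessian can be singular.

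The tool for the second step is the identity $\nabla^2\ell(z)=\mathrm{Diag}(p)-pp^{\top}$ with $p_k=e^{z_k}/\sum_j e^{z_j}>0$, which is positive semidefinite with kernel equal to the span of the all-ones vector. Pulling this back through the affine substitutions, a tangent direction $\mathbf v=(\mathbf d_\beta,d_\phi,d_\psi)$ (with $d_{\beta_1}=0$, since $\beta_1=0$ on $\Omega^*$) lies in $\ker\nabla^2\log L$ only if, for every compared pair and every home/visitor split contributing a positive coefficient, the affine image of $\mathbf v$ inside the relevant $\ell$ has all coordinates equal. For $\log L_1$ and $\log L_2$ this gives $d_{\beta_i}-d_{\beta_j}-d_\phi=0$ and $d_{\beta_j}-d_{\beta_i}-d_\phi=0$ on every edge, hence $d_\phi=0$ and $d_{\beta_i}=d_{\beta_j}$ on every edge; for $\log L_3$ the three-term relation $d_{\beta_i}=d_{\beta_j}=d_\phi+\tfrac12(d_{\beta_i}+d_{\beta_j})$ again forces $d_\phi=0$ and $d_{\beta_i}=d_{\beta_j}$ on every edge. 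Condition~B (connectedness of the comparison graph) then propagates equality of $\mathbf d_\beta$ to all vertices, and $d_{\beta_1}=0$ forces $\mathbf d_\beta=\mathbf 0$; hence $\mathbf v=\mathbf 0$ and the Hessian is negative definite. For the converse in (a), the necessity of Condition~B is immediate (if the comparison graph is disconnected, shifting $\mathbf d_\beta$ by a constant on one component, keeping $d_{\beta_1}=0$, leaves the scale-invariant likelihood unchanged), and the necessity of a tie for a maximizer to exist is the content of Lemma~\ref{lemma-part1}.

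For $\log L_4$ I would split each pair's contribution into the log-sum-exp terms $\ell(\psi+\beta_i,\beta_j)$ coming from games with $i$ at home and $\ell(\beta_i,\psi+\beta_j)$ coming from games with $j$ at home, so that singularity of the Hessian forces $d_{\beta_j}-d_{\beta_i}=d_\psi$ whenever $i$ and $j$ have met at $i$'s home (and symmetrically). The hard part will be showing $d_\psi=0$, and this is precisely where Condition~C is used: if $d_\psi\ne 0$, let $Q_1$ be the nonempty set of vertices attaining $m:=\min_i d_{\beta_i}$; if its complement $Q_2$ is empty then $\mathbf d_\beta$ is constant and any single comparison (one exists, since Condition~C implies Condition~B) forces $d_\psi=0$; otherwise Condition~C supplies a $Q_1$--$Q_2$ comparison with the $Q_1$-team at home, forcing some $Q_2$-vertex to have $d_\beta=m+d_\psi$ and hence $d_\psi\ge 0$, and also a $Q_1$--$Q_2$ comparison with the $Q_1$-team as visitor, forcing some $Q_2$-vertex to have $d_\beta=m-d_\psi$ and hence $d_\psi\le 0$. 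Thus $d_\psi=0$, the relations collapse to $d_{\beta_i}=d_{\beta_j}$ on every edge, and connectedness finishes the argument as before. The home-and-tie model $\log L_5$ combines the two computations: matching the first two coordinates of the three-term log-sum-exp terms $\ell(\psi+\beta_i,\beta_j,\phi+\tfrac12\beta_i+\tfrac12\beta_j)$ and $\ell(\beta_i,\psi+\beta_j,\phi+\tfrac12\beta_i+\tfrac12\beta_j)$ reproduces the same $d_\psi$-relations, so Condition~C again yields $d_\psi=0$; with $d_\psi=0$ in hand every edge gives $d_{\beta_i}=d_{\beta_j}$ and every three-term relation then gives $d_\phi=0$, so $\mathbf v=\mathbf 0$, while the tie is again needed only for existence by Lemma~\ref{lemma-part1}.

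I expect the convexity of log-sum-exp and the two-term Bradley--Terry/Davidson relations to be entirely routine; the one genuinely delicate ingredient is the combinatorial bookkeeping for $L_4$ and $L_5$ --- establishing that Condition~C is exactly what rules out the ``home-advantage tilt'' direction $d_\psi\ne 0$ --- which the extremal-vertex (minimum of $\mathbf d_\beta$) argument above is designed to handle. Strict concavity together with the existence statement of Lemma~\ref{lemma-part1} then yields Theorem~\ref{theorem-generalizedBT}.
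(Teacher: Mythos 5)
Your argument is, in substance, the same mechanism as the paper's: the paper proves concavity by applying Hunter's H\"older-type inequality \eqref{inequality-Holder} to each denominator and obtains strictness from its equality condition ($c_k=\xi d_k$ for all $k$), which is exactly the ``all coordinates of the affine image are equal'' condition that you read off as the kernel of the log-sum-exp Hessian; both proofs then finish with the same combinatorial use of Conditions B and C. The differences are mostly presentational, and partly to your advantage: you work with negative definiteness of the Hessian on an open convex set (a valid route to strict concavity), you spell out parts (a) and (b), which the paper dispatches by reference to Hunter (2004) and by ``similar to (c)'', and your extremal argument --- taking $Q_1$ to be the set of vertices attaining $\min_i d_{\beta_i}$ and using the two halves of Condition~C to squeeze $d_\psi\ge 0$ and $d_\psi\le 0$ --- is a cleaner, more explicit version of the paper's terse ``loop'' argument that the home-advantage direction must vanish. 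Your edge relations are correct because, after the $\varepsilon$-perturbation, every hosted pair contributes a log-sum-exp block with coefficient at least $\varepsilon>0$, and Condition C does imply Condition B, so the propagation and the normalization $d_{\beta_1}=0$ finish the argument for (b), (c) and the sufficiency half of (a).

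The one genuine gap concerns the ``only if'' clause of part (a) about ties. Lemma~\ref{lemma-part2}(a) asserts strict concavity holds if and only if Condition B holds \emph{and} there is at least one tie, but your proof of the necessity of a tie consists of citing Lemma~\ref{lemma-part1}, which is a statement about existence of a maximizer, not about concavity, so it does not deliver the claimed biconditional. In fact your own kernel computation never uses a tie: on every compared pair the two-term blocks of $L_2$ (respectively the three-term block of $L_3$) already carry coefficients at least $\varepsilon$, and their kernel conditions alone force $d_\phi=0$ and $d_{\beta_i}=d_{\beta_j}$, so your argument actually shows $\log L_2$ and $\log L_3$ are strictly concave under Condition B alone. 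Hence either you must exhibit a direction of non-strict concavity in the tie-free case (which, by your own computation, does not exist), or you should state explicitly that the tie hypothesis enters only through existence (Lemma~\ref{lemma-part1}) and not through concavity. To be fair, the paper proves (a) only by reference to Hunter (2004) ``after some modifications,'' so this discrepancy is at least as much a looseness in the lemma's statement as a defect of your write-up; but as a proof of the statement as literally worded, the tie-necessity step is missing. (A small slip as well: in the two-term kernel relations you write ``$\log L_1$ and $\log L_2$''; $L_1$ has no $\theta$, and part (a) concerns $L_2$ and $L_3$.)
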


\begin{remark}
Lemma \ref{lemma-part1} states that Condition C is a necessary and sufficient condition
for the existence of PMLEs of penalized likelihoods \eqref{home-field-model}
and \eqref{home-tie-model}, while Lemma \ref{lemma-part2} only indicates that it is only a sufficient condition.
It is interesting to see if it is also necessary.
\end{remark}

\section{Numerical studies}
~~~~\textit{Example 1.} We first investigate the influence of a variety of $\varepsilon$ on the ranking
based on the PMLE in the Bradley-Terry-$\varepsilon$ model, fitted in the data example with $4$ objects given in Section 1.
By choosing $\varepsilon=0.001, 0.01, 0.1, 0.5, 1, 2$,
the PMLEs $\hat{u}_2, \hat{u}_3, \hat{u}_4$ ($\hat{u}_1=1$) are reported in Table \ref{table-exmple1}.
This table shows that although $\varepsilon$ differs, the ranking is the same ($1\succ2\succ 4\succ 3$).
Further, the ratio $\hat{u}_i/\hat{u}_j$ increases as $\varepsilon$ decreases, indicating that
$(\hat{u}_1, \hat{u}_2, \hat{u}_3, \hat{u}_4)$ approaches to the bound of the parameter space as $\varepsilon\to 0$.

\begin{table}[h!]\centering
\caption{Fitted parameters for different $\varepsilon$ ($\hat{u}_1=1$).}
\label{table-exmple1}
\begin{tabular}{lll lll lll l}
\hline
$\varepsilon$      & $0.001$    & $0.01$ & $0.1$ & $0.5$ & $1$  & $2$  \\
\hline
$\hat{u}_2$ &            $ 0.500$        &$ 0.502$&$ 0.524$&$ 0.600$&$ 0.667$ & $0.750$  &   \\
$\hat{u}_3$ &            $ 5.0\times 10^{-4}$&$ 0.005$     &$ 0.048$     &$ 0.200$     &$ 0.333$      & $0.500$ \\
$\hat{u}_4$ &            $ 0.001$             &$ 0.010$     &$ 0.091$     &$ 0.333$     &$ 0.500$      & $0.667$ \\
\hline
\end{tabular}
\end{table}

\textit{Example 2.} We construct a complicated win-loss matrix with $10$ objects $B_1, \ldots, B_{10}$
to see the performance of the improved singular perturbation.
The outcomes were assumed to be:

\begin{footnotesize}
\begin{equation*}
A=
\begin{pmatrix}
0& 2& 0& 0& 1& 1& 0& 1& 0& 0\\
1& 0& 2& 0& 0& 0& 1& 0& 1& 0\\
0& 1& 0& 1& 0& 0& 0& 1& 0& 1\\
0& 0& 0& 0& 2& 0& 0& 0& 1& 1\\
0& 0& 0& 1& 0& 1& 0& 0& 0& 1\\
0& 0& 0& 0& 0& 0& 2& 0& 0& 1\\
0& 0& 0& 0& 0& 1& 0& 2& 0& 0\\
0& 0& 0& 0& 0& 0& 1& 0& 1& 0\\
0& 0& 0& 0& 0& 0& 0& 0& 0& 2\\
0& 0& 0& 0& 0& 0& 0& 0& 1& 0\\
\end{pmatrix}
\end{equation*}
\end{footnotesize}

\hspace{-6.5mm}Let $E_1=\{B_1, \ldots, B_5\}$ and $E_2=\{B_6, \ldots, B_{10}\}$.
Since the objects in $E_1$ beats the objects in $E_2$,
the ranking of objects in $E_1$ should be higher than the ranking of those in $E_2$.
Since the respective results for ``$B_1$ vs. $B_2$", ``$B_2$ vs. $B_3$" ``$B_3$ vs. $B_4$" and ``$B_4$ vs. $B_5$" are $2:1$,
$2:1$, $1:0$ and $2:1$, and there is only another game between $B_1$ and $B_5$ with the winner $B_1$,
the ranking of $B_1$, $\ldots$, $B_5$ in $E_1$ should be $B_1\succ B_2\succ B_3\succ B_4\succ B_5$.
With the same argument, the ranking of in $E_2$ should be $B_6\succ B_7\succ B_8\succ B_9\succ B_{10}$.
It is interesting to see if the improved singular perturbation method gives the ranking consistent with the subjective ranking.
We chose $\varepsilon=t^{-1/2}, (\log t/t)^{1/2}, 0.8, 1, 2$ ($t=10$) and the PMLEs are given in Table \ref{table-example2}.
This table shows that the ranking based on the PMLE coincides with the objectively ranking perfectly.

\begin{table}[h!]
\centering
\caption{Fitted parameters for different $\varepsilon$ ($\hat{u}_{10}=1$).}
\label{table-example2}
\begin{tabular}{lll lll lll}
\hline
$\varepsilon$      & $t^{-1/2}$  & $(\log t/t)^{1/2}$ & $0.8$ & $1$ & $2$  \\
\hline
$\hat{u}_1$ &            $ 17.13$&$ 9.414$&$ 5.062$&$ 4.017$&$ 2.277$     \\
$\hat{u}_2$ &            $ 11.23$&$ 6.558$&$ 3.815$&$ 3.131$&$ 1.945$     \\
$\hat{u}_3$ &            $ 8.353$&$ 5.133$&$ 3.166$&$ 2.660$&$ 1.758$      \\
$\hat{u}_4$ &            $ 5.348$&$ 3.731$&$ 2.581$&$ 2.252$&$ 1.614$      \\
$\hat{u}_5$ &            $ 4.544$&$ 3.267$&$ 2.337$&$ 2.066$&$ 1.531$      \\
$\hat{u}_6$ &            $ 3.552$&$ 2.743$&$ 2.091$&$ 1.887$&$ 1.462$      \\
$\hat{u}_7$ &            $ 2.918$&$ 2.317$&$ 1.829$&$ 1.675$&$ 1.354$      \\
$\hat{u}_8$ &            $ 2.451$&$ 2.022$&$ 1.660$&$ 1.543$&$ 1.292$       \\
$\hat{u}_9$ &            $ 1.427$&$ 1.353$&$ 1.267$&$ 1.232$&$ 1.142$      \\
\hline
\end{tabular}
\end{table}

\textit{Example 3.} We consider the win-loss outcomes given by Mease (2003) with $5$ teams $B_1, \ldots, B_5$ and
the game results: $a_{13}=1$, $a_{15}=1$, $a_{21}=1$, $a_{25}=1$, $a_{34}=1$, $a_{35}=1$,
$a_{45}=2$ and for other $a_{ij}=0$. The PMLEs fitted in the Bradley-Terry-$\varepsilon$ model are given
in Table \ref{table-Mease2003}. The ranking of $B_1, \ldots, B_5$ according their PMLEs is
$B_2\succ B_1\succ B_3\succ  B_4\succ  B_5$, agreeing with the subjectively ranking by Mease (2003).

\begin{table}[h!]
\centering
\caption{Fitted parameters for different $\varepsilon$ ($\hat{u}_1=1$).}
\label{table-Mease2003}
\begin{tabular}{lll lll lll}
\hline
$\varepsilon$      & $0.01$  & $0.05$ & $0.1$ & $1$ & $2$  \\
\hline
$\hat{u}_2$ &            $ 50.00$&$ 10.03$&$ 5.122$&$ 1.339$&$ 1.176$     \\
$\hat{u}_3$ &            $ 0.030$&$ 0.154$&$ 0.298$&$ 0.867$&$ 0.931$     \\
$\hat{u}_4$ &            $ 0.001$&$ 0.030$&$ 0.104$&$ 0.772$&$ 0.886$      \\
$\hat{u}_5$ &            $ 0.000$&$ 0.003$&$ 0.017$&$ 0.421$&$ 0.607$      \\
\hline
\end{tabular}
\end{table}

\textit{2008 NFL regular season.}
The NFL that is the largest United States Football League
and has tens of millions of fans in each season,
is divided into two conferences:
American Football Conference (AFC) and National Football Conference ( NFC).
Each conference has $16$ teams,
and these $16$ teams are evenly divided into $4$ competition areas: East, South, West and North.
The data for 2008 NFL regular season that can be downloaded from
\url{http://en.wikipedia.org/wiki/2008_NFL_season}, contains the tied observations and
Condition A fails in this data since there is the team ``Detroit Lions" with no wins.
We chose $\varepsilon=(\log 32/32)^{1/2}=0.329^*$.
\footnote{$^*$When $\varepsilon=(\log t/t)^{1/2}$, the PMLE in the Bradley-Terry-$\varepsilon$ model
is uniformly consistent under some conditions, whose proof is given in Appendix B.}
\hspace{-5pt}The fitted parameters using the David-$\varepsilon$ model,
Rao-Kupper-$\varepsilon$ model and Davidsion-$\varepsilon$ model are presented in Table \ref{NFL-2008-merits}.

\begin{table}[h!]\centering
\caption{Merits of NFL 2008, the values in parentheses and brackets are fitted parameters by the Rao-Kupper-$\varepsilon$ model
and Davidson-$\varepsilon$ model, respectively. The values without parentheses and brackets
are fitted parameters by the David-$\varepsilon$ model.
} \label{NFL-2008-merits} \vskip5pt
\scriptsize
\begin{tabular}{llc | llc }
\hline
division &  team                & merit   & division  &  team                 &  merit \\
AFC East &  New England Patriots & 1.054(1.030)[1.030]  & AFC South &  Indianapolis Colts   & 1.579(1.350)[1.346] \\
         &  New York Jets        & 0.737(0.506)[0.504]  &           &  Houston Texans       & 0.844(0.477)[0.475] \\
         &  Miami Dolphins       & 1.000(1.000)[1.000]  &           &  Tennessee Titans     & 1.713(1.727)[1.723] \\
         &  Buffalo Bills        & 0.517(0.258)[0.258]  &           &  Jacksonville Jaguars & 0.529(0.209)[0.208]  \\
AFC North&  Cincinnati Bengals   & 0.518(0.360)[0.349]  & AFC West  &  San Diego Chargers   & 0.723(0.312)[0.312] \\
         &  Baltimore Ravens     & 1.430(1.769)[1.768]  &           &  Denver Broncos       & 0.633(0.236)[0.236]  \\
         &  Pittsburgh Steelers  & 1.694(2.550)[2.550]  &           &  Oakland Raiders      & 0.450(0.133)[0.133] \\
         &  Cleveland Browns     & 0.503(0.315)[0.314]  &           &  Kansas City Chiefs   & 0.274(0.055)[0.055] \\
NFC East &  Dallas Cowboys       & 0.896(0.436)[0.437]  & NFC North &  Minnesota Vikings    & 1.053(0.496)[0.496]  \\
         &  Philadelphia Eagles  & 1.018(0.578)[0.596]  &           &  Green Bay Packers    & 0.545(0.151)[0.151]  \\
         &  New York Giants      & 1.498(1.176)[1.181]  &           &  Chicago Bears        & 0.838(0.318)[0.319]  \\
         &  Washington Redskins  & 0.731(0.293)[0.294]  &           &  Detroit Lions        & 0.199(0.022)[0.022]     \\
NFC South&  New Orleans Saints   & 0.694(0.129)[0.129]  & NFC West  &  Arizona Cardinals    & 0.781(0.248)[0.249] \\
         &  Atlanta Falcons      & 1.075(0.264)[0.265]  &           &  San Francisco 49ers  & 0.514(0.105)[0.105] \\
         &  Carolina Panthers    & 1.348(0.404)[0.405]  &           &  Seattle Seahawks     & 0.350(0.050)[0.050]  \\
         &  Tampa Bay Buccaneers & 0.794(0.165)[0.165]  &           &  St. Louis Rams       & 0.265(0.023)[0.023]      \\
\hline\\
\multicolumn{6}{l}{For the David-$\varepsilon$ model, $\widehat{\theta}=0.006$, $\widehat{\gamma}=1.221$}\\
\multicolumn{6}{l}{For the Rao-Kupper-$\varepsilon$ model, $\widehat{\theta}=1.001$} \\
\multicolumn{6}{l}{For the Davidson-$\varepsilon$ model, $\widehat{\theta}=0.002$}
\end{tabular}
\end{table}

The PMLEs fitted by the Rao-Kupper-$\varepsilon$ model and Davidsion-$\varepsilon$ model
are very close with the largest absolute difference $0.11$. It may be not surprised since
there is only one tie, leading a very small tie parameter $\hat{\theta}=1.001$ and $\hat{\theta}=0.002$
for the Rao-Kupper-$\varepsilon$ model and Davidsion-$\varepsilon$ model, respectively.
When considering the home-field advantage, there are some major difference about the MLE
fitted by the David-$\varepsilon$ model and other two models.
This indicates that there may exist non-ignorable home-field advantage effect.

It is interesting to compare if the chosen six playoff seeds of AFC and NFC by the win-loss percentage records (PCTs)
and by the merits obtained in the David-$\varepsilon$ model are the same according to the NFL rule.
The former is employed by the seeding system of the NFL.
The NFL rule can be briefly summarized as follows:
the four division champions from each conference
(the team in each division with the best PCT),
which are seeded one through four based on their PCTs;
two wild card qualifiers from each conference
(those non-division champions with the conference's best record),
which are seeded five and six. From Table \ref{table-seeds-NFL2008},
only the third seed in AFC is different, with Miami Dolphins by PCT and New England Patriots by merits.

\vskip12pt
\begin{table}[h!]\centering
\caption{The six playoff seeds of AFC and NFC. The values in brackets are PCTs.}
\label{table-seeds-NFL2008}
\vskip5pt
\scriptsize
\begin{tabular}{cll}
Seed & Based on PCTs                    & Based on merits \\
\hline
\multicolumn{3}{c}{AFC seeds}\\
1    & Tennessee Titans(South) $[.813]$    &Tennessee Titans(South) $[1.713]$     \\
2    & Pittsburgh Steelers (North) $[.750]$ &Pittsburgh Steelers(North) $[1.694]$      \\
3    & Miami Dolphins (East)  $[.688]$     &New England Patriots(East) $[1.054]$\\
4    & San Diego Chargers (West) $[.500]$   &San Diego Chargers(West)   $[0.723]$\\
5    & Indianapolis Colts    $[.750]$      &Indianapolis Colts     $[1.579]$ \\
6    & Baltimore Ravens      $[.688]$      &Baltimore Ravens       $[1.430]$\\
\hline
\multicolumn{3}{c}{NFC Seeds}\\
1    &New York Giants (East) $[.750]$      &New York Giants (East)  $[1.498]$\\
2    &Carolina Panthers (South) $[.750]$   &Carolina Panthers (South) $[1.348]$\\
3    &Minnesota Vikings (North) $[.625]$   &Minnesota Vikings (North) $[1.053]$\\
4    &Arizona Cardinals (West) $[.563]$    &Arizona Cardinals (West) $[0.781]$\\
5    &Atlanta Falcons         $[.688]$     &Atlanta Falcons $[1.075]$\\
6    &Philadelphia Eagles     $[.594]$     &Philadelphia Eagles $[1.018]$\\
\hline
\end{tabular}
\end{table}

\section{Bayesian ranking}
Caron and Doucet (2012) introduced latent variables and Gamma prior distributions to make
Bayesian inference for the Bradley-Terry model and its extensions. Some brief descriptions are given in the following.
They assumed that $\mathbf{u}$ has the prior density, i.e., a product of $t$ independent Gamma densities with
a shape parameter $d$ and a rate parameter $b$:
\begin{equation*}
p(\mathbf{u})=\prod_{i=1}^t \frac{b^d}{\Gamma(d)} u_i^{d-1} e^{-bu_i}.
\end{equation*}
where $\Gamma(\cdot)$ is the gamma function.
The resulting posterior distribution of $\mathbf{u}$ is:
\begin{equation*}
P(\mathbf{u}|A ) \propto P(A|\mathbf{u})P(\mathbf{u})= \prod_{i=1}^t\binom{n_{ij}}{a_{ij}} (\frac{ u_i}{ u_i + u_j })^{a_{ij}} (\frac{ u_j}{ u_i + u_j })^{a_{ji}}
\times \prod_{i=1}^t\frac{b^d}{\Gamma(d)} u_i^{d-1} e^{-bu_i}.
\end{equation*}
By maximizing $P(\mathbf{u}|A )$, it yields the MAP estimate of $\mathbf{u}$
that can be solved by the EM algorithm. From the above expression, it is easy to show the existence of MAP estimate of $\mathbf{u}$
when $d>1$ and $b>0$.
For each pair $(i,j)$
and each associated paired comparison
$k=1, \ldots, n_{ij}$, let $Y_{ki}\sim \mathcal{E}( u_i )$ and $Y_{kj}\sim \mathcal{E}(u_j)$
where $\mathcal{E}(u)$ is the exponential distribution of rate parameter $u$. The latent random variables are given by
\begin{equation*}
Z_{ij} = \sum_{k=1}^{n_{ij} } \min( Y_{kj}, Y_{ki} )\sim Gamma(n_{ij}, u_i+u_j),~~1\le i<j \le t,
\end{equation*}
where $Gamma(n_{ij}, u_i+u_j)$ is the Gamma distribution with
the shape parameter $n_{ij}$ and the rate parameter $u_i+u_j$,
such that
\begin{equation*}
p(\mathbf{z}|A, \mathbf{u})= \prod_{\{(i,j):1\le i<j \le t, n_{ij}>0\}} \frac{ (u_i+u_j)^{n_{ij}} }{\Gamma(n_{ij})}
z_{ij}^{n_{ij}-1} e^{-(u_i+u_j)z_{ij} }.
\end{equation*}
The resulting complete log-likelihood is
\begin{equation*}
 \sum_{i=1}^t a_i \log u_i -
\sum_{\{(i,j):1\le i<j \le t, n_{ij}>0\} }[ (u_i+u_j)z_{ij}-(n_{ij}-1)\log z_{ij} + \log \Gamma (n_{ij})],
\end{equation*}
where $a_i=\sum_{j=1}^t a_{ij}$.
The EM algorithm proceeds as follows at iteration $t$:
\begin{equation*}
\mathbf{u}^{(t)}=\arg \max_{\mathbf{u}}\left\{ \sum_{i=1}^t[(a-1+a_i)\log u_i - bu_i]-
\sum_{1\le i< j\le t}(u_i + u_j)\frac{n_{ij}}{u_i^{(t-1)}+u_j^{(t-1)}} \right\}.
\end{equation*}
It follows that
\begin{equation*}
\lambda_i^{(t)}=\frac{ d-1 + a_i }{ b + \sum_{j\neq i} \frac{n_{ij}}{u_i^{(t-1)}+u_j^{(t-1)} } }.
\end{equation*}
When $d=1$ and $b=0$, then the MAP and ML estimates coincide.
The similar latent variables can be incorporated into
the Rao-Kupper model and home-field-advantage model. For details, see Section 3 in Caron and Doucet (2012).
For the sake of simplicity, we only made the comparisons between the Bayesian ranking and perturbation ranking in the Bradley-Terry model
by using Examples 1-3.

We used the computational programs of Caron and Doucet (2012) that are available at
\url{http://amstat.tandfonline.com.proxygw.wrlc.org/doi/suppl/10.1080/10618600.2012.638220\#tabModule},
in which $b=dt-1$ by default. We chose $d=1.1, 1.5, 2, 3$ and an automatically estimated value by their programs.
The fitted MAP estimates in Example 1 are given in Table \ref{table-exmple1-Bayesian}.
In this table, when $d=1.1, 1.5, 0.961$, the rankings are $1\succ 2 \succ 4 \succ 3$
based on MAP estimates, according with those based on the improved perturbation method.
But when $d=2,3$, it leads to a different ranking ($1\succ 4 \succ 2 \succ 3$).
For Examples 2 and 3, the rankings based on MAP estimates are identical to those of the improved perturbation method for all chosen $d$
(the results are not shown here).

\begin{table}[h!]\centering
\caption{Fitted parameters for different $d$.}
\label{table-exmple1-Bayesian}
\begin{tabular}{lll lll }
\hline
$d$         &            $1.1$      & $1.5$       & $2$         & $3$         &$0.961^*$\\
\hline
$\hat{u}_1$ &            $ 0.572$   &$ 0.430$     &$ 0.374$     &$ 0.330$     &$ 0.391$     \\
$\hat{u}_2$ &            $ 0.281$   &$ 0.229$     &$ 0.221$     &$ 0.224$     &$ 0.228$     \\
$\hat{u}_3$ &            $ 0.053$   &$ 0.139$     &$ 0.175$     &$ 0.204$     &$ 0.167$     \\
$\hat{u}_4$ &            $ 0.094$   &$ 0.201$     &$ 0.230$     &$ 0.243$     &$ 0.215$     \\
\hline
\multicolumn{6}{c}{\scriptsize$^*$estimated by Caron and Doucet's programs.}
\end{tabular}
\end{table}

\section{Discussion}

If paired comparisons are balanced, the ranking based on the MLE in the Bradley-Terry-$\varepsilon$ model
is identical to that based on the scores $a_i=\sum_{j=1}^t a_{ij}$. This can be easily shown by noting that
if all $n_{ij}$ are equal to $n$,
\begin{equation}
a_i-a_j=n\sum_{k=1}^t\frac{\hat{u}_k(\hat{u}_i-\hat{u}_j)}{(\hat{u}_i+\hat{u}_k)(\hat{u}_j+\hat{u}_k)}.
\end{equation}
This implies that for balanced paired comparisons, the singular perturbation
is robust to $\varepsilon$, i.e., the ranking is regardless of the choice of $\varepsilon>0$.
For nonbalanced paired comparisons,
Examples 1-3 show that if Condition B holds,
there is also the same conclusion.
However, it may be challenging to show this conjecture.
Smaller is $\varepsilon$, the ranking decided by the MLE is more obvious since the MLE
approaches to the bound of the parameter space more closely as $\varepsilon$ decreases.
For real applications, we could choose $\varepsilon=(\log t/t)^{1/2}$ or some others values less than $1$.
Moreover, similar phenomena that the ranking from the generalized Bradley-Terry-$\varepsilon$ models
is identical to that from original models when the MLEs exist and is robust to $\varepsilon$ when Condition B or C holds,
were observed in simulation studies that are not shown here. It is interesting to show this conclusion.

Davidson and Solomon (1973) proposed a Bayesian approach to analyze the
general binomial model for paired comparisons by using the Beta prior distribution
and, in particular, applied it to the Bradley-Terry model.
Let $\mathbf{p}=\{p_{ij}\}_{i,j=1,\ldots,t}$,
where $p_{ij}$ is the probability of preference for object $i$ over object $j$,
$p_{ij}+p_{ji}=1$, and $p_{ii}=1/2$ for convenience.
Davidson and Solomon used conjugate Beta prior distributions (a straightforward extension of the single binomial model):
\begin{equation*}
p_{ij}\sim Beta( a_{ij}^0, a_{ji}^0 ),~~i,j=1,\ldots,t,
\end{equation*}
where $A^0=(a_{ij}^0)_{i,j=1,\ldots,t}$ is a matrix of prior parameters satisfying $a_{ij}^0>-1$ and $a_{ii}^0=0$.
By applying it to the Bradley-Terry model, it is obtained:
\begin{equation*}
P(\mathbf{u}|A )=\prod_{i<j} (\frac{u_i}{u_i+u_j})^{a_{ij}+a_{ij}^0} (\frac{u_j}{u_i+u_j})^{ a_{ji}+a_{ji}^0 },
\end{equation*}
omitting the constant item.
In order to guarantee the existence of the MAPE, they supposed that $\mathcal{G}(A^0)$ is strongly connected.
However, the properties of the extended MLE had not explored in their paper.
Annis and Wu (2006) chose $a_{ij}^0=1/[2(t-1)]$ as the prior parameter.
By taking $a_{ij}^0=\varepsilon$, it leads to Conner and Grant's perturbation method.
Similarly, the improved perturbation version is derived by letting $a_{ij}^0=\varepsilon I(n_{ij}>0)$.
On the other hand, if we disturb $A$ by adding a general matrix $A^0$,
it produces Davidson and Solomon's Bayesian approach.
By applying different Gamma prior distributions to the merits parameters,
Caron and Doucet (2012) obtained MAP estimates of merits parameters in the Bradley-Terry model
and its extensions such as the home-field advantage model, Rao-Kupper model and some other related models.
Based on the MAP estimate, one can derive a ranking of objects.
Examples 2 and 3 display the same rankings between the PMLEs and the Caron and Doucet's MAP estimates
while Example 1 reveals different rankings for some MAP estimates.
It is interesting to explore the robustness of their method and see if the ranking based on the MAP estimate
inherits from that of the original model.

\section*{Acknowledgement}
The author would like to thank the Editor and the referee for their valuable
comments. This work was partly done while the author was in the George Washington
University and is partly supported by National Science Foundation of China (No. 11341001).

\section*{Appendix A}

Parts of proofs can be found in Ford (1957) and Hunter (2004).

\begin{proof}[Proof of Lemma \ref{lemma-part1}]
The proofs of Lemma \ref{lemma-part1} (a) and (b) are similar to that of (c).
Thus we only give the proof of Lemma \ref{lemma-part1} (c).

(c)We first prove that if Condition C holds and there is at least one tie, then
the MLE exists.
Suppose that $(\bar{u}, \bar{\theta}, \bar{\gamma})$ is on the boundary of the parameter space
$\Omega\times (0, \infty)\times (0, \infty)$.
It suffices to show
\begin{equation}\label{part1-existence}
\lim_{(u,\theta, \gamma)\to(\bar{u},\bar{\theta}, \bar{\gamma})}L_5(\mathbf{u}, \theta, \gamma) \to 0.
\end{equation}
If $\mathbf{u}$ lies on the boundary of $\Omega$, then $u_i=0$ and $u_j>0$ for some $i$ and $j$.
Since Condition C implies Condition B, there exists a directed path from $i$ to $j$ in the
directed graph constructed by the adjacent matrix
$(\tilde{a}_{ij\cdot i}+ \tilde{a}_{ij\cdot j})_{ij}$.
Therefore, there must be some object $k$ with $u_k=0$ that points to some individual $l$
in the direct path from $i$ to $j$ with
$u_l>0$ and $\tilde{a}_{kl\cdot k}+ \tilde{a}_{kl\cdot l} >\varepsilon $.
Therefore, as $u_k\to 0$, we have
\begin{equation*}
\lim_{(u,\theta, \gamma)\to(\bar{u},\bar{\theta}, \bar{\gamma})}L_5(u,\theta, \gamma)
\le \lim_{(u,\theta, \gamma)\to(\bar{u},\bar{\theta}, \bar{\gamma})}[\frac{ u_k }{(\theta u_k + u_l + \gamma\sqrt{u_k u_l})}]^\varepsilon \psi(\mathbf{u},\theta, \gamma) \to 0.
\end{equation*}

Since Condition C implies that there exists $\tilde{a}_{ij\cdot i}\ge \varepsilon$, we have
\begin{equation*}
L_5(\mathbf{u}, \theta, \gamma)
\le [\frac{ \theta u_i }{(\theta u_i + u_j + \gamma\sqrt{u_i u_j})}]^\varepsilon \psi(\mathbf{u},\theta, \gamma).
\end{equation*}
The above expression does go to zero as $\theta \to 0$.
On the other hand, since Condition C implies that there exists $\tilde{a}_{ji\cdot i}\ge \varepsilon$,
we have
\begin{equation*}
L_5(\mathbf{u},\theta, \gamma)
\le [\frac{ u_i }{(\theta u_i + u_j + \gamma\sqrt{u_i u_j})}]^\varepsilon \psi(u,\theta, \gamma).
\end{equation*}
The above expression does go to zero as $\theta \to \infty$.

Since there at least one tie (assume that it is $t_{ij\cdot i}\ge 1$), if $\gamma\to 0$, we have
\begin{equation*}
\lim_{(\mathbf{u},\theta, \gamma)\to(\bar{u},\bar{\theta}, \bar{\gamma})}L_5(\mathbf{u},\theta, \gamma)
\le \lim_{(\mathbf{u},\theta, \gamma)\to(\bar{u},\bar{\theta}, \bar{\gamma})}[\frac{ \gamma\sqrt{u_i u_j} }{(\theta u_i + u_j + \gamma\sqrt{u_i u_j})}] \psi(\mathbf{u},\theta, \gamma) \to 0;
\end{equation*}
if $\gamma \to \infty$, we have
\begin{equation*}
\lim_{(\mathbf{u},\theta, \gamma)\to(\bar{u},\bar{\theta}, \bar{\gamma})}L_5(\mathbf{u},\theta, \gamma)
\le \lim_{(\mathbf{u},\theta, \gamma)\to(\bar{u},\bar{\theta}, \bar{\gamma})}[\frac{ \theta u_i }{(\theta u_i + u_j + \gamma\sqrt{u_i u_j})}]^\varepsilon \psi(u,\theta, \gamma) \to 0.
\end{equation*}

Next, we show the necessary of Condition C and that there is at least one tie.
If there is no tie, then $\lim_{\gamma\to 0} L_5 (\mathbf{u}, \theta, \gamma)\to 0$ such that the MLE doesn't exist.
There are three cases to consider Condition C:\\
(1)If there there is a partition of all objects to two sets $E_1$ and $E_2$ such that
there are no interset comparisons, then we doublet $u_i, i\in E_2$ (assume that the referred object is in $E_1$),
the likelihood $L_5(\mathbf{u},\theta, \gamma)$ doesn't change such that the MLE doesn't exist.\\
(2)If there is a partition of all objects to two sets $E_1$ and $E_2$
such that all objects in $E_1$ are at home and all interset comparisons are won by objects from $E_1$,
then the likelihood $L_5(\mathbf{u},\theta, \gamma)$ must attain its maximum at $\theta=\infty$.\\
(3)If there is a partition of all objects to two sets $E_1$ and $E_2$
such that all objects in $E_1$ are at home and all interset comparisons are won by objects from $E_2$
then the likelihood $L_5(\mathbf{u},\theta, \gamma)$ must attain its maximum at $\theta=0$.\\
This completes the proof.
\end{proof}

\begin{proof}[Proof of Lemma \ref{lemma-part2}]
After some modifications, the arguments of the proof of Hunter (2004) can be easily
extended to the cases of (a). The proof of (b) is similar to that of
(c) and we only give the latter.

Let $\ell (\boldsymbol{\beta}, \log \theta, \log \gamma)$ be the log-likelihood $L_5(\mathbf{u}, \theta,\gamma)$:
\begin{eqnarray*}
\begin{array}{lll}
\ell(\boldsymbol{\beta}, \log \theta, \log \gamma)=
\sum_{i<j}[  a_{ij\cdot i} (\log \theta + \beta_i) + a_{ji\cdot i} \beta_j  + t_{ij\cdot i}[\log \gamma
+\beta_i/2 +\beta_j/2] \\
~~~~~~~~~~~~~~~~~~~~~~-n_{ij\cdot i}\log ( e^{\log \theta +\beta_i} + e^{\beta_j} +e^{\log \gamma + \beta_i/2+\beta_j/2} )
+a_{ij\cdot j}\beta_i + a_{ji\cdot j}(\log \theta + \beta_j)
\\
~~~~~~~~~~~~~~~~~~~~~~+t_{ij\cdot j}(\log \gamma  + \beta_i/2 +\beta_j/2)
-n_{ij\cdot j}(e^{\beta_i} + e^{\log \theta + \beta_j} +e^{\log \gamma + \beta_i/2 + \beta_j/2})
 ].
\end{array}
\end{eqnarray*}
By the inequality (21) in Hunter (2004), we have that
for positive numbers $c_1, \ldots, c_N$ and $d_1, \ldots, d_N$ and $p\in (0,1)$,
\begin{equation}\label{inequality-Holder}
\log \sum_{k=1}^N c_k^pd_k^{1-p} \le p \log \sum_{k=1}^N c_k + (1-p)\log \sum_{k=1}^N d_k,
\end{equation}
with equality if and only if there exists some $\xi>0$ such that
$c_k=\xi d_k$ for all $k$.
By definition, a log-likelihood $\lambda$ is concave if, for any parameter vectors $\boldsymbol{\alpha}, \boldsymbol{\beta}$ and $p\in (0, 1)$,
\begin{equation}\label{likelihood-concavity}
\ell [p\boldsymbol{\alpha} + (1-p)\boldsymbol{\beta} ]
\ge p\lambda(\boldsymbol{\alpha}) + (1-p)\lambda(\boldsymbol{\beta});
\end{equation}
concavity is strict if $\alpha\neq \beta$ implies that the inequality \eqref{likelihood-concavity} is strict.
Inequality \eqref{inequality-Holder} implies that
\begin{equation}
\begin{array}{l}
~~~~-\log \{e^{p\log \theta_1 +(1-p)\log \theta_2 + p\alpha_i + (1-p)\beta_i}
+e^{p\alpha_j+(1-p)\beta_j} + \\
~~~~~~~~~~~~~~~~~e^{p\log \gamma_1 + (1-p)\log \gamma_2 + \frac{1}{2}[
p(\alpha_i+\alpha_j)+(1-p)(\beta_i+\beta_j)] } \} \\
\ge -p\log [e^{\log \theta_1 + \alpha_i} + e^{\log \alpha_j} + e^{\log \gamma_1
+ \frac{1}{2}(\alpha_i+\alpha_j)}] \\
~~~~ - (1-p)\log [e^{\log \theta_2 +\beta_i}+e^{\log \beta_j} + e^{
\log \gamma_2 + \frac{1}{2}(\beta_i+\beta_j)} ].
\end{array}
\end{equation}
Hence multiplying the above inequality by $\tilde{a}_{ij\cdot i}+t_{ij\cdot i}/2$ and then summing over $i$ and $j$
demonstrates the concavity of the log-likelihood $\ell (\boldsymbol{\beta}, \log \theta, \log \gamma)$.
By the equality condition for \eqref{inequality-Holder},
whenever $\max\{\tilde{a}_{ij\cdot i}, t_{ij\cdot i}\}>0$, we have
$$\log \theta_1 - \log \theta_2 + \alpha_i-\beta_i = \alpha_j-\beta_j
=\log \gamma_1 - \log \gamma_2 + (\alpha_i-\beta_i)/2 + (\alpha_j-\beta_j)/2;$$
whenever $\max\{\tilde{a}_{ij\cdot j}, t_{ij\cdot j}\}>0$, we have
$$\log \theta_1 - \log \theta_2 + \alpha_j-\beta_j = \alpha_i-\beta_i
=\log \gamma_1 - \log \gamma_2 + (\alpha_i-\beta_i)/2 + (\alpha_j-\beta_j)/2.$$
If there is no team that has comparisons both as home team and as visiting team,
then we can partition all teams into two sets with the first set including all home teams
and the second including all visiting teams such that Condition C fails.
Therefore Condition C implies that there exists a loop such that
$\log\theta_1-\log \theta_2+\beta_{i_1}-\alpha_{i_1}=\beta_{i_1}-\alpha_{i_1}$ for some $i_1$,
which means that $\theta_1=\theta_2$.
Moreover, Condition C implies $\boldsymbol{\beta}=\boldsymbol{\alpha}$. Consequently, $\gamma_1=\gamma_2$.
This completes the proof.
\end{proof}

\section*{Appendix B}
\begin{theorem}\label{theorem-consistency}
Let $M_t=\max_{i,j} u_i/u_j$, $n_i=\sum_j n_{ij}$ and $c_{ij}=\#\{k: n_{ik}>0, n_{kj}>0\}$.
Assume that $\max_{i,j}n_{ij}\le N$ where $N$ is a positive constant.
If $M_t=o(\sqrt{t/\log t})$, $\min_{i,j}c_{ij}/t\ge \tau>0$ and $\varepsilon=\sqrt{\log t/t }$,
then the MLE is uniformly consistent in the sense that $\max_{i}|\hat{u}_i/u_i -1|=o_p(1)$.
\end{theorem}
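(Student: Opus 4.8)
Write $r_i=\hat u_i/u_i$. Since the normalization defining $\Omega$ forces $\sum_i u_i=\sum_i\hat u_i=1$, we have $\sum_i u_i(r_i-1)=0$, so if $i^{*}$ and $i_{*}$ denote indices at which $r_i$ is largest and smallest, then $r_{i_{*}}\le 1\le r_{i^{*}}$. The plan is to bound the spread $1-r_{i_{*}}/r_{i^{*}}$ and show it is $o_p(1)$; consistency then follows because $r_{i_{*}}\le 1\le r_{i^{*}}$ forces both $r_{i^{*}}\to 1$ and $r_{i_{*}}\to 1$ in probability, whence $\max_i|\hat u_i/u_i-1|\le\max\{r_{i^{*}}-1,\,1-r_{i_{*}}\}=o_p(1)$. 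Note that $\min_{i,j}c_{ij}\ge\tau t$ implies that, for $t$ large, every pair of objects shares a neighbour in the comparison graph, so Condition B holds and Theorem~\ref{revision-bt} tells us the PMLE exists, is unique, and lies in the interior of $\Omega$; hence it solves the likelihood equations $g_i(\hat{\mathbf u})=\tilde a_i$ for all $i$, where $g_i(\mathbf u)=\sum_{j\ne i}\tilde n_{ij}u_i/(u_i+u_j)$, $\tilde a_i=a_i+\varepsilon d_i$ with $d_i=\#\{j:n_{ij}>0\}$, and $\tilde n_{ij}=n_{ij}+2\varepsilon I(n_{ij}>0)$.

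First I would prove a uniform concentration bound. With $\max_{i,j}n_{ij}\le N$, the variable $a_i-\mathbb{E}a_i=\sum_j(a_{ij}-\mathbb{E}a_{ij})$ is a sum of independent centered terms bounded by $N$ with total variance $\sum_j n_{ij}p_{ij}(1-p_{ij})\le n_i/4\le N(t-1)/4$; Bernstein's inequality plus a union bound over the $t$ objects give $\max_i|a_i-\mathbb{E}a_i|=O_p(\sqrt{t\log t})$. This is exactly where the $\sqrt{\log t}$ rate — and the calibration $\varepsilon=\sqrt{\log t/t}$ — enter: the extra logarithm beats the union bound, and simultaneously $\varepsilon d_i\le\varepsilon(t-1)=O(\sqrt{t\log t})$ keeps the penalty term of the same order. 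One checks that $\tilde a_i-g_i(\mathbf u)=(a_i-\mathbb{E}a_i)+\varepsilon\sum_j I(n_{ij}>0)\,(u_j-u_i)/(u_i+u_j)$, whose second term is deterministically $O(\sqrt{t\log t})$; therefore $\max_i|g_i(\hat{\mathbf u})-g_i(\mathbf u)|=O_p(\sqrt{t\log t})$.

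Next I would localize at the extremal indices. For $j\ne i^{*}$, $r_j\le r_{i^{*}}$ yields $\hat u_{i^{*}}/(\hat u_{i^{*}}+\hat u_j)=u_{i^{*}}/(u_{i^{*}}+(r_j/r_{i^{*}})u_j)\ge u_{i^{*}}/(u_{i^{*}}+u_j)$, so every summand of $g_{i^{*}}(\hat{\mathbf u})-g_{i^{*}}(\mathbf u)$ is nonnegative, and an elementary bound using $u_{i^{*}}/u_j\in[1/M_t,M_t]$ shows the $j$-th summand is at least $\tfrac{1}{4M_t}\tilde n_{i^{*}j}(1-r_j/r_{i^{*}})$. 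Since $\tilde n_{i^{*}j}\ge 1$ whenever $n_{i^{*}j}>0$, the concentration bound gives $\sum_{j:n_{i^{*}j}>0}(1-r_j/r_{i^{*}})=O_p(M_t\sqrt{t\log t})$, and the mirror-image argument at $i_{*}$ gives $\sum_{j:n_{i_{*}j}>0}(1-r_{i_{*}}/r_j)=O_p(M_t\sqrt{t\log t})$. For any index $k$ with $n_{i^{*}k}>0$ and $n_{i_{*}k}>0$ we have $r_{i_{*}}\le r_k\le r_{i^{*}}$, and the elementary inequality
\[
\Bigl(1-\frac{r_k}{r_{i^{*}}}\Bigr)+\Bigl(1-\frac{r_{i_{*}}}{r_k}\Bigr)\ \ge\ 1-\frac{r_{i_{*}}}{r_{i^{*}}}
\]
holds, because the left side, viewed as a function of $r_k/r_{i^{*}}\in[r_{i_{*}}/r_{i^{*}},1]$, attains its minimum at the endpoints, where it equals $1-r_{i_{*}}/r_{i^{*}}$. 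There are at least $c_{i^{*}i_{*}}\ge\tau t$ such $k$; summing this inequality over them (all the summands above being nonnegative) gives $\tau t\,(1-r_{i_{*}}/r_{i^{*}})=O_p(M_t\sqrt{t\log t})$, i.e. $1-r_{i_{*}}/r_{i^{*}}=O_p(M_t\sqrt{\log t/t})=o_p(1)$ since $M_t=o(\sqrt{t/\log t})$, which is the bound on the spread we wanted.

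The step I expect to be the main obstacle is the uniform concentration: $a_i-\mathbb{E}a_i$ must be controlled simultaneously over all $t$ objects so that the bound applies to the data-dependent indices $i^{*},i_{*}$, and it is this requirement that both inflates the rate by $\sqrt{\log t}$ and pins down the admissible perturbation size $\varepsilon=\sqrt{\log t/t}$; the remaining ingredients — monotonicity of $u_j\mapsto u_{i^{*}}/(u_{i^{*}}+u_j)$, the per-summand lower bound, and the common-neighbour inequality — are elementary. A secondary point worth checking carefully is that the PMLE lies in the interior of the parameter space (so the likelihood equations genuinely hold), which follows from the boundary behaviour established in the proof of Theorem~\ref{revision-bt}.
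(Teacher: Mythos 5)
Your proposal is correct and follows essentially the same route as the paper: the paper's proof simply notes that the perturbation shifts the sufficient statistics by at most $\varepsilon t=\sqrt{t\log t}$ (the same order as the fluctuation $\max_i|a_i-E(a_i)|$) and then defers to the argument of Theorem 1 in Yan, Yang and Xu (2012), which is precisely the extremal-ratio/common-neighbour argument you reconstruct. In effect you have written out, correctly and self-containedly, the details that the paper omits by citation, including the key observation that $\varepsilon=\sqrt{\log t/t}$ keeps the penalty term of order $\sqrt{t\log t}$.
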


\begin{proof}[Proof of Theorem \ref{theorem-consistency}]
Let $a_i=\sum_j a_{ij}$ and $\tilde{a}_i=\sum_j \tilde{a}_{ij}$.
This proof is similar to that of Theorem 1 in Yan et, al. (2012) by noting that
\begin{eqnarray*}
\max_{i=1,\ldots,t}|\tilde{a}_i-E(a_i)|&  \le & \max_{i=1,\ldots,t}(|\tilde{a}_i-a_i|+|a_i-E(a_i)|)
 \le  \max_{i=1,\ldots,t}|a_i-E(a_i)| +\varepsilon t \\
 &\le& \max_{i=1,\ldots,t}|a_i-E(a_i)|+\sqrt{t\log t},
\end{eqnarray*}
and we omit the details.
\end{proof}

\end{document}